\documentclass[
  aps,
  prd,                 
  twocolumn,           
  superscriptaddress,  
  nofootinbib,         
  amsmath,amssymb,
  longbibliography,     
]{revtex4-2}

\usepackage{graphicx}
\usepackage{dcolumn}
\usepackage{bm}
\usepackage{amsthm}
\usepackage{mathrsfs}
\usepackage{xcolor}
\usepackage{tikz}
\usetikzlibrary{positioning, arrows.meta, calc}
\usepackage{xcolor}

\usepackage[utf8]{inputenc}
\usepackage[T1]{fontenc}
\usepackage{etoolbox}
\usepackage{hyperref}

\DeclareMathAlphabet{\mathcal}{OMS}{cmsy}{m}{n}

\theoremstyle{plain} 
\newtheorem{theorem}{Theorem}

\theoremstyle{definition} 

\begin{document}

\preprint{APS/PRD-2025-DRAFT}

\title{Gauge Theoretic Signal Processing I: The Commutative Formalism for Single-Detector Adaptive Whitening}

\author{James Kennington}
 \email{jwkennington@psu.edu}
 \affiliation{Department of Physics, The Pennsylvania State University, University Park, Pennsylvania 16802, USA}
 \affiliation{Institute for Gravitation and the Cosmos, The Pennsylvania State University, University Park, Pennsylvania 16802, USA}

\author{Joshua Black}
 \affiliation{Department of Physics, The Pennsylvania State University, University Park, Pennsylvania 16802, USA}
 \affiliation{Institute for Gravitation and the Cosmos, The Pennsylvania State University, University Park, Pennsylvania 16802, USA}

\date{\today}

\begin{abstract}
We present a geometric framework for adaptive whitening in gravitational-wave detectors, reformulating the problem from a sequence of spectral factorizations to parallel transport on a principal bundle.
We identify the whitening filter as a section over the manifold of power spectra and derive the \textit{minimum-phase connection} as the unique geometric structure that enforces signal causality while preserving signal-to-noise ratio.
This construction yields a rigorous definition of \textit{geometric drift}, a coordinate-independent scalar measuring the intrinsic instability of the detector noise floor.
The central result is the \textit{flatness theorem}, which proves that the curvature of the connection vanishes for scalar fields.
This establishes a \textit{holonomic update law}, guaranteeing that the optimal filter correction is path-independent and determined solely by the instantaneous noise state, free from geometric phase or hysteresis.
This framework unifies the static theory of Wiener-Hopf factorization with the dynamic requirements of real-time control, providing a rigorous certification for the stability of zero-latency calibration routines and establishing a foundation for gauge-theoretic signal processing (GTSP) in next-generation detector networks.
\end{abstract}

\maketitle

\section{Introduction}
\label{sec:intro}

The detection of gravitational waves reduces fundamentally to the problem of distinguishing a deterministic waveform (signal) $h(t)$ from a stochastic background $n(t)$ \cite{helstromStatisticalTheorySignal1968}.
In the regime where physical strains are orders of magnitude smaller than the instrumental noise floor, as is the case for LIGO \cite{TheLIGOScientificCollaboration2015Advanced}, Virgo \cite{Acernese2015Advanced}, and KAGRA \cite{Akutsu2021Overview}, simple amplitude thresholding fails \cite{maggiore2007gravitational, creighton2011gravitational}.
Optimality requires a detection statistic that exploits the full spectro-temporal structure of the signal \cite{finn1992detection}.

In the domain of weak signal detection, the theory of matched filtering provides a complete and rigorous framework for maximizing the signal-to-noise ratio (SNR), $\rho$, in stationary Gaussian noise \cite{helstromStatisticalTheorySignal1968, wainsteinExtractionSignalsNoise1970, finn1992detection}.
For a signal $h(t)$ with Fourier transform $\tilde{h}(f)$, the optimal SNR is defined by the noise-weighted inner product
\footnote{
Note that throughout this work we adopt a two-sided power spectral density convention, defined over $f \in (-\infty, \infty)$, to naturally accommodate the algebraic properties of the Wiener-Hopf and loop group factorizations. 
This differs by a factor of 2 from the one-sided positive-frequency PSD standard commonly used in the gravitational-wave literature.
}:
\begin{equation}
    \rho^2 = \langle h, h \rangle_S = \int_{-\infty}^\infty \frac{|\tilde{h}(f)|^2}{S(f)} df.
\end{equation}
The central operator in this framework is the whitening filter $W$, which diagonalizes the noise covariance.
If $S(f)$ is the power spectral density (PSD) of the detector noise, the whitening filter is the solution to the factorization problem \cite{kolmogorov1941interpolation, wiener1949extrapolation}:
\begin{equation}
    \label{eq:spectral_factorization}
    |W(f)|^2 = \frac{1}{S(f)}.
\end{equation}
Standard texts typically treat this as a static boundary value problem: given a fixed background $S(f)$, one computes the unique spectral factor \cite{kailathLinearEstimation2000, oppenheimDiscretetimeSignalProcessing2010, sayedFundamentalsAdaptiveFiltering2003}.

However, gravitational-wave detectors are inherently non-stationary systems \cite{abbottCharacterizationTransientNoise2016, abbottGuideLIGODetector2020, davisLIGODetectorCharacterization2021, sun2020characterization, vietsReconstructingStrainDetector2018}.
Environmental coupling, thermal drifts in the suspension systems, and scattered light transients cause the noise PSD to evolve continuously over time, $S(t, f)$ \cite{davisImprovingSensitivityAdvanced2019, zackayDetectingGravitationalWaves2021, buikemaSensitivityPerformanceAdvanced2020, covasIdentificationMitigationNarrow2018, cahillaneCalibrationUncertaintyAdvanced2017, driggersImprovingAstrophysicalParameter2019}.
Consequently, Eq.~(\ref{eq:spectral_factorization}) becomes a dynamic constraint \cite{littenbergBayesLineBayesianInference2015, biscoveanuQuantifyingEffectPower2020}, necessitating adaptive search techniques or perturbative corrections to maintain search sensitivity \cite{kenningtonPerturbativeGauge2026}.
The optimal whitening filter is no longer a static operator, but a trajectory $W(t, f)$ in the space of linear systems.

In conventional engineering practice, this non-stationarity is managed through discrete windowing and re-computation, a process that inevitably introduces computational latency and phase discontinuities at window boundaries \cite{messickAnalysisFrameworkPrompt2017b, nitzPyCBCLiveRapid2018, adamsLowlatencyAnalysisPipeline2016a, chuSPIIROnlineCoherent2021, cannonEarlyWarningDetectionGravitational2012d, sachdevGstLALSearchAnalysis2019a, alleneMBTAPipelineDetecting2025, ewingPerformanceLowlatencyGstLAL2024, mageeFirstDemonstrationEarly2021}.
Furthermore, standard attempts to linearly interpolate between these windows are geometrically unsound \cite{baggioConalDistancesRational2018, georgiouDistancesPowerSpectral2006, jiangGeodesicCurvesManifold2012}.
This is because the primary requirement of a physical whitening filter is \textit{invertibility} \cite{bocheSpectralFactorizationWhitening2005, clancey1977factorization}: the transformation must map the stochastic background to white noise without destroying physical information.
The set of invertible causal filters does not form a vector subspace, but rather a \textit{Lie group of units} \cite{rudin1991functional, gohberg1960systems}. 

\begin{figure*}[ht]
\centering
\includegraphics[width=0.98\textwidth]{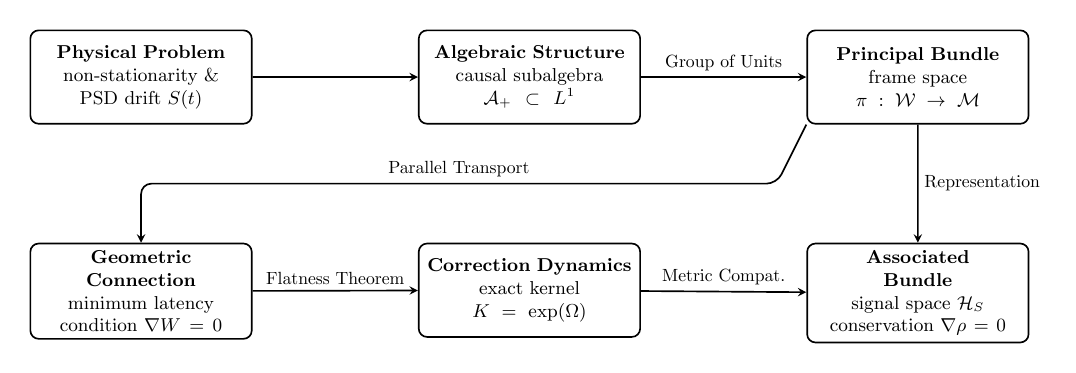}
\caption{\label{fig:logical_roadmap} Logical roadmap of the gauge-theoretic framework.
The top row establishes the geometry: the causal algebra allows us to construct a Principal Bundle of filters.
The vertical link (right) defines the Associated Hilbert Bundle of signals.
The bottom row derives the dynamics: defining parallel transport via causality leads to a flat connection, which simplifies into the Exact Correction kernel that guarantees SNR conservation ($\nabla \rho = 0$) via metric compatibility.}
\end{figure*}

This work proposes a unified geometric resolution to the problem of adaptive whitening.
We identify the whitening filter as a dynamic \textit{vielbein} (a geometric frame field that diagonalizes the local signal metric), which adapts the signal space to the noise trajectory.
While the static geometry of statistical manifolds is well-understood through information geometry \cite{amariMethodsInformationGeometry2000}, we extend this to the dynamical regime by demonstrating that the problem of updating a filter is isomorphic to the problem of \textit{parallel transport} in a principal fiber bundle \cite{pancharatnam1956generalized, berry1984quantal, simon1983holonomy}.
The update rule is not an interpolation, but a covariant derivative:
\begin{equation}
    \nabla_{\dot{\gamma}} W(t) = 0.
\end{equation}
This principal connection induces a covariant derivative on the \textit{associated Hilbert bundle} of signal states \cite{kobayashi1963foundations, bleecker2013gauge}.
This geometric structure identifies the detection statistic as a conserved charge; the parallel transport of the filter guarantees the SNR, $\rho$, remains a covariant constant along the trajectory of the noise \cite{barbaresco2008innovative, amariMethodsInformationGeometry2000}:
\begin{equation}
    \nabla_{\dot{\gamma}} \rho(t) = 0.
\end{equation}
The physical gain of this approach is the unification of \textit{stability} and \textit{causality} into a single, rigorous geometric requirement.
By enforcing the horizontality of the transport, we ensure that the filter trajectory remains strictly within the manifold of causally invertible operators, providing a deep geometric explanation for the structural stability of the detector's control loop. 

The physical constraint of \textit{minimum latency} (strict causality) \cite{toll1956causality} uniquely determines the specific connection form $\omega$ on the principal bundle, governing how the frame must evolve to maintain orthonormality without introducing acausal phase shifts.
The logical architecture of this framework, which links the algebraic structure of causal filters to the geometric stability of the detection statistic, is summarized in Figure \ref{fig:logical_roadmap}.

\textit{Motivation: The non-abelian horizon.} 
The choice to employ the sophisticated machinery of fiber bundles for a scalar problem requires justification.
If the goal were merely to whiten a static scalar channel, classical spectral factorization would suffice \cite{wiener1949extrapolation, kailathLinearEstimation2000}.
However, the central motivation of this framework is to establish a rigorous control theory for the \textit{next generation} of detectors.
Future observatories, including terrestrial 3G facilities \cite{reitze2019cosmic, punturo2010einstein} and space-based constellations like LISA \cite{amaro2017laser}, operate as fundamentally Multi-Input Multi-Output (MIMO) sensor arrays.
In these coupled systems, the algebra becomes non-abelian and the optimal update generalizes to a path-dependent Dyson series governed by non-vanishing curvature \cite{wilczek1984appearance, simon1983holonomy}.
This work (Part I) serves as the \textit{necessary foundational limit} for that general theory.
We must rigorously prove the stability of the commutative case to define the baseline for the non-abelian gauge theory.
By solving the scalar problem geometrically, we do not just reproduce the Wiener solution; we reveal it as the zero-curvature limit of the deeper gauge-theoretic structure required for future detector networks.

\textit{The Failure of Linear Interpolation.} 
One might ask why geometric transport is required at all, why not simply linearly interpolate between the reference filter $W_{\text{ref}}$ and the target $W_{\text{live}}$?
The answer lies in the structural stability of the control loop.
A naive linear update $W(t) = (1-\alpha)W_{\text{ref}} + \alpha W_{\text{live}}$ does not respect the topology of the causal subalgebra.
The sum of two minimum-phase filters is not necessarily minimum-phase; roots can migrate across the imaginary axis during the transition, introducing causal violations and instability.
By contrast, the geometric geodesic derived here guarantees that the filter trajectory remains strictly within the minimum-phase manifold $\mathcal{W}$, preserving the causal invertibility of the instrument state at every instant of the transition.

\textit{Organization of the paper.} The paper is organized as follows.
In Section \ref{sec:geometry}, we define the state spaces of the detector, comprising the manifold of admissible PSDs and the algebra of causal filters, and unify them by constructing the principal whitening bundle, identifying the unitary loop group \cite{pressley1986loop} as the gauge symmetry of spectral factorization. 

Section \ref{sec:transport} addresses the dynamic control problem, deriving the minimum-phase connection as the unique geometric structure that preserves the physical matched-filter SNR without inducing dispersive group delay.
We establish the continuous parallel transport equation, demonstrating that the flow is generated by the causal projection of the logarithmic spectral drift \cite{garnettBoundedAnalyticFunctions2007, gohberg1960systems}. 

Our central result, presented in Section \ref{sec:holonomic_update}, is the flatness theorem.
We prove that the curvature of the minimum-phase connection vanishes identically.
Because the base manifold is simply connected, this guarantees that parallel transport is globally path-independent, allowing us to collapse the complex path-ordered transport integral into the exact, computationally trivial holonomic update law. 

Finally, Section \ref{sec:discussion} addresses the physical and operational implications of the framework.
We define the geometric drift metric as a rigorous, coordinate-independent trigger for tracking instrument stability, discuss the zero-latency deployment of the correction kernel, and outline the necessary extension of this formalism to non-abelian gauge theories for next-generation MIMO detector networks.
To maintain focus on the physical geometry, the rigorous foundations are deferred to the appendices, which detail the functional analysis of the Wiener algebra (App. \ref{app:functional_analysis}), the bundle topology and flatness proofs (App. \ref{app:bundle_topology}), the metric compatibility on the signal space (App. \ref{app:snr_conservation}), and the operational tensor representations for pipeline architecture (App. \ref{app:operational_reps}).

\textit{Structure of the Series.} This article serves as the first in a two-part series establishing the framework of gauge-theoretic signal processing (GTSP).
Here in Part I, we focus exclusively on the continuous mathematical foundation.
We treat the power spectral density as a smooth trajectory on a manifold to derive the \textit{holonomic update law}, the unique filter evolution that preserves signal causality and minimizes geometric drift.
The practical realization of this theory is presented in the companion paper, \textit{Gauge Theoretic Signal Processing II: The Operational Implementation for Zero-Latency Matched Filtering}.
In Part II, we address the discrete estimation of the PSD and the management of estimation lag.
We then utilize a production pipeline to validate the framework in two regimes: an offline analysis to quantify sensitivity preservation against standard linear-phase baselines and signal recovery against uncorrected minimum-latency methods; and an online analysis to verify the latency reduction in a realistic low-latency alert generation environment.

\section{The Gauge Geometry of Adaptive Whitening}
\label{sec:geometry}

To reformulate adaptive whitening as a geometric problem, we must identify the state spaces of the detector and the continuous symmetries of the spectral factorization process.
We introduce these components sequentially: the base manifold of noise states, the algebra of causal filters, and the principal bundle that unifies them.

\subsection{The Noise Manifold and Causal Algebra}
Let the configuration space of the detector's background noise be the manifold $\mathcal{M}$, defined as the space of \textit{admissible power spectral densities (PSDs)}.
An admissible PSD $S \in \mathcal{M}$ is a symmetric, strictly positive function bounded away from zero and infinity in the $L^\infty(\mathbb{R})$ norm, physically representing a noise floor that bottoms out at a constant thermal or quantum shot-noise limit.
To ensure spectral operations remain bounded, we further require $S(f)$ to be locally Hölder continuous.
The rigorous topological properties of $\mathcal{M}$, including its structure as a simply-connected convex cone and its satisfaction of the Kolmogorov-Krein criterion, are detailed in Appendix \ref{app:functional_analysis}.

For a given noise state $S \in \mathcal{M}$, the operation of whitening requires finding a filter $W$ such that the power spectrum of the filtered noise is flat: $|W(f)|^2 = S^{-1}(f)$. 

However, for $W(f)$ to be implemented in a real-time control system, its time-domain impulse response $h_W(t)$ must be strictly causal ($h_W(t) = 0$ for $t < 0$) and absolutely integrable to guarantee bounded-input bounded-output (BIBO) stability.
We therefore demand that valid physical filters reside in the \textit{causal Wiener algebra}, denoted $\mathcal{A}_+$.
For a filter to be both causal and causally invertible, it must belong to the group of units of this subalgebra, $\mathcal{G}(\mathcal{A}_+)$, which corresponds precisely to the set of strict minimum-phase filters.

\subsection{The Whitening Bundle}

\begin{figure*}[t] 
    \centering
    \includegraphics[width=0.98\textwidth]{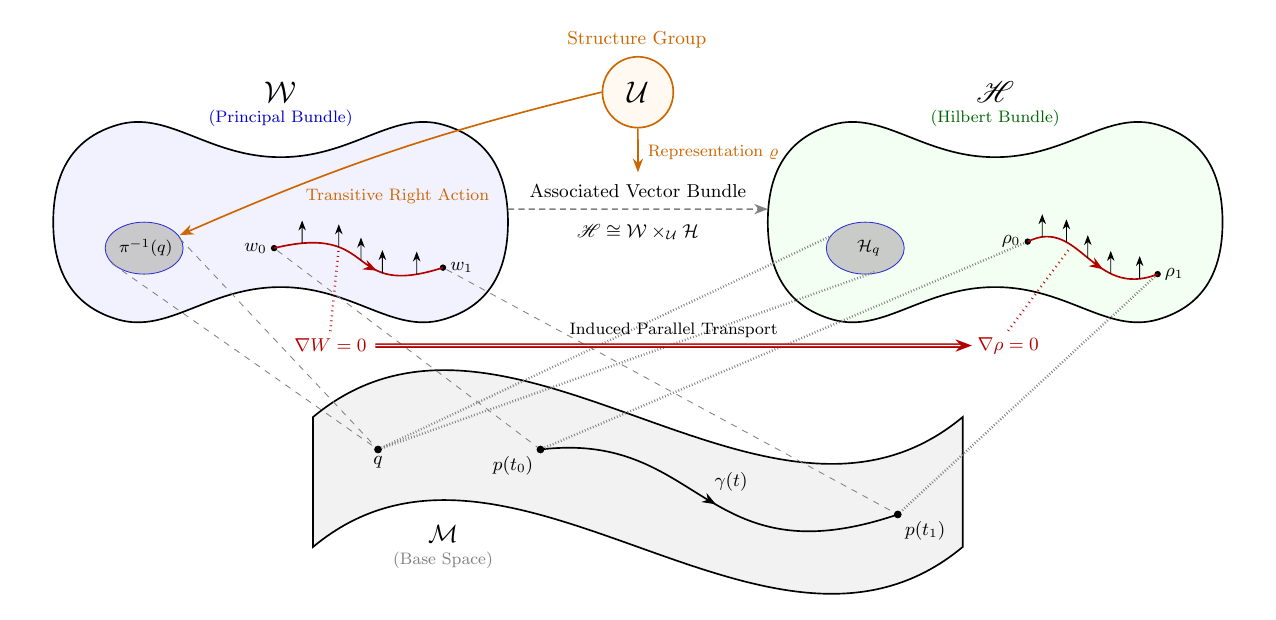}
    \caption{\label{fig:bundles}Principal bundle geometry of adaptive whitening over the noise manifold $\mathcal{M}$.
The structure group $\mathcal{U}$ coordinates the principal bundle $\mathcal{W}$ (filters) and the associated Hilbert bundle $\mathscr{H}$ (signals).
Strict causality imposes the connection defining the filter parallel transport $\nabla W = 0$; this induces the compatible signal transport $\nabla \rho = 0$, ensuring intrinsic SNR conservation along the trajectory $\gamma(t)$.}
\end{figure*}

The spectral magnitude condition $|W|^2 = S^{-1}$ does not uniquely determine the whitening filter.
If $W$ is a valid frame, any gauge transformation $W \mapsto W \cdot U$ preserves the whitening condition, provided $U(f)$ is a pure phase rotation, $|U(f)| = 1$. 

We parameterize this continuous symmetry using the structure group $\mathcal{U}$.
By applying a one-point compactification to the frequency domain ($f \to \pm\infty$), we identify $\mathcal{U}$ as a dense, stationary Lie subgroup of the infinite-dimensional unitary loop group, $L\mathrm{U}(1)$.
The precise boundary conditions that prevent $1/t$ instability, alongside the group's Birkhoff factorization properties, are explicitly constructed in Appendix \ref{app:functional_analysis}.

This local phase invariance naturally constructs a principal fiber bundle over the noise manifold, which we term the \textbf{whitening bundle} (illustrated in Figure \ref{fig:bundles}):
\begin{equation}
    \mathcal{W} \xrightarrow{\pi} \mathcal{M}.
\end{equation}
The base space $\mathcal{M}$ represents the physical magnitude of the noise, while the vertical fiber $\mathcal{W}_S \cong \mathcal{U}$ over any given $S \in \mathcal{M}$ contains the set of exact whitening filters that differ only by their phase response.

To operate the detector, the control system must continuously select one specific filter from each fiber.
Geometrically, this choice constitutes a \textit{global section} of the bundle, $\sigma: \mathcal{M} \to \mathcal{W}$.
Because latency is the critical metric in gravitational-wave transient alerts, we define the \textbf{minimum-phase section}, $\sigma_{mp}$.
By applying the causal Riesz projection $\mathcal{P}_+$ to the logarithm of the PSD, we isolate the mathematically unique filter in the fiber that perfectly whitens the noise while accumulating zero excess geometric phase.
In standard signal processing, the additive algebra of these logarithmic spectral projections provides the theoretical foundation for the complex cepstrum.
Geometrically, this section serves as a global trivialization of the whitening bundle, ensuring that every noise state is assigned a strictly causal, minimum-latency reference frame.

\section{The minimum-phase connection and parallel transport}
\label{sec:transport}

Having established the whitening bundle, we now address the dynamic problem of adaptive control.
As the physical environment of the detector evolves, the true power spectrum traces a continuous trajectory $\gamma(t)$ through the base manifold $\mathcal{M}$.
To maintain optimal sensitivity, the control system must continuously update the whitening filter $W(t)$ along the fibers above this trajectory.
Geometrically, this requires equipping the bundle with an Ehresmann connection, which provides a rule for parallel transport \cite{atiyah1978geometry, kobayashi1963foundations, bleecker2013gauge}.

\subsection{The minimum-latency constraint}
A connection smoothly separates the tangent space of the bundle into a vertical subspace (pure phase gauge transformations) and a horizontal subspace (physical changes in noise magnitude).
Transporting a filter "horizontally" means updating its magnitude to match the new noise state without accumulating arbitrary gauge phase.

Because the structure group $\mathcal{U}$ is unitary, \textit{any} connection will perfectly preserve the scalar whitening condition $|W|^2 = S^{-1}$.
However, from a physical signal-processing perspective, an arbitrary connection is catastrophic.
If the parallel transport dictates a gauge transformation that accumulates non-linear phase, it induces frequency-dependent group delay.
This dispersion smears the time-domain wave packet of the gravitational-wave template, destroying the peak of the matched-filter cross-correlation.
Even a linear phase accumulation, while preserving the wave shape, introduces a spurious time-of-arrival shift that degrades the astrophysical localization of the source network.

To guarantee that the parallel transport preserves both the exact waveform shape and its true astrophysical arrival time, we demand that the transport accumulates absolutely zero excess geometric phase.
We formalize this by defining the \textbf{minimum-phase connection}.
We declare the global minimum-phase section $\sigma_{mp}$ defined in Section \ref{sec:geometry} to be horizontal.
Therefore, a filter is parallel transported along a noise trajectory if and only if its covariant derivative vanishes: $\nabla_{\dot{\gamma}} W = 0$.

While other Ehresmann connections are mathematically possible, defined by horizontal distributions associated with different subgroups of the unitary loop group, they are physically prohibited by the requirements of low-latency detection.
For example, a \textit{linear-phase connection} would accumulate a frequency-dependent phase drift during transport; while this would preserve the spectral magnitude, it would introduce an artificial, time-varying time-of-arrival shift that would decouple the detector from the global timing grid.
More general dispersive connections would induce non-linear group delay, smearing the signal power and reducing the effective search sensitivity.
The minimum-phase connection is thus the unique geometric structure that unifies the requirements of spectral optimality, stability, and zero-latency timing into a single covariant requirement.

\subsection{The transport equation}
To determine how the filter must dynamically evolve to satisfy this horizontal constraint, we evaluate the connection 1-form.
As derived rigorously via the Maurer-Cartan structure equations in Appendix \ref{app:bundle_topology}, the instantaneous algebraic generator of the transport, $\Omega(t) = \dot{W}W^{-1}$, evaluates exactly to:
\begin{equation}
    \Omega(t) = \mathcal{P}_+\left[- \frac{\partial}{\partial t} \log S(t) \right].
\end{equation}
This establishes a fundamental physical result: the minimum-latency transport of a filter is generated purely by the causal projection of the instantaneous logarithmic drift of the power spectrum.

To update a reference filter $W(0)$ to the live state $W(t)$ over a finite time interval, we integrate this generator along the trajectory $\gamma$.
The solution to this differential equation is the parallel transport operator, or \textbf{correction kernel} $\mathcal{K}_\gamma$.
Formally, this is given by the path-ordered exponential (the Dyson series):
\begin{equation}
    \label{eq:dyson}
    \mathcal{K}_\gamma = \mathcal{T} \exp \left( \int_{\gamma} \Omega(t) \, dt \right),
\end{equation}
where $\mathcal{T}$ denotes time-ordering along the path.
As we shall demonstrate in Section \ref{sec:holonomic_update}, the curvature of this connection vanishes identically for scalar fields.
This flatness allows the complex path-ordered integral to collapse into a trivial, path-independent state-to-state update law, which serves as a central verifiable prediction of the GTSP framework.

\subsection{Intrinsic SNR conservation}
The transport of the filter natively induces a compatible transport on the physical signals themselves.
A whitening filter acts as a local frame (a \textit{vielbein}), mapping the noise-weighted Hilbert space of the curved base manifold to a flat Euclidean space.
As proven in Appendix \ref{app:snr_conservation}, the condition $\nabla_{\dot{\gamma}} W = 0$ strictly guarantees that the covariant derivative of the optimal SNR vanishes: $\nabla_{\dot{\gamma}} \rho = 0$. 

Thus, continuous parallel transport via the minimum-phase connection provides an exact geometric guarantee: the pipeline intrinsically conserves the theoretical optimum matched-filter SNR of the evolving detector without incurring any dispersive penalties.

\section{The flatness theorem and holonomic update Law}
\label{sec:holonomic_update}
While Section III established that the parallel transport operator $\mathcal{K}_\gamma$ (the Dyson series in Eq. \ref{eq:dyson}) successfully conserves the exact SNR, evaluating a path-ordered integral in a real-time control system is generally intractable.
It implies that the necessary filter correction depends on the exact historical trajectory of the noise floor, meaning the system exhibits geometric hysteresis.
To determine whether the filter actually retains memory of how the noise drifted, we must evaluate the holonomy of the connection.

\subsection{The flatness theorem}
The holonomy of a principal bundle is governed by the curvature 2-form $\Theta$ of its connection.
If the curvature is non-zero, parallel transport is path-dependent, and the optimal filter requires continuous, integrated tracking of the noise evolution.
If the curvature vanishes identically, the connection is flat, and transport between any two states is entirely path-independent. 

We formalize the structural stability of the single-detector whitening problem in the following theorem:

\begin{theorem}[The Flatness Theorem]
\label{thm:flat}
The minimum-phase connection $\omega_{mp}$ on the whitening bundle $\mathcal{W}$ has vanishing curvature, $\Theta \equiv 0$.
Consequently, the parallel transport operator $\mathcal{K}_\gamma$ is invariant under path homotopy.
Because the base manifold of admissible power spectra $\mathcal{M}$ is a simply connected convex cone ($\pi_1(\mathcal{M}) = 0$), the transport is strictly and globally path-independent.
\end{theorem}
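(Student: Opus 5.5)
We compute the curvature directly from the connection form and the Cartan structure equation, then use the topology of $\mathcal{M}$ to pass from local to global path-independence.

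\textbf{Step 1: the connection form.} Work in the global trivialization furnished by the minimum-phase section $\sigma_{mp}$. Writing an arbitrary point of $\mathcal{W}$ as $W = \sigma_{mp}(S)\,U$ with $U \in \mathcal{U}$, the connection is
\begin{equation*}
\omega_{mp} = U^{-1}dU + U^{-1}\,\sigma_{mp}^*\omega\,U .
\end{equation*}
The pulled-back local form $\sigma_{mp}^*\omega$ is the vertical (pure-phase) part of $d\sigma_{mp}\,\sigma_{mp}^{-1}$. Explicitly, $\sigma_{mp}(S) = \exp\bigl(\mathcal{P}_+[-\tfrac12\log S]\bigr)$, up to the normalization fixed in Appendix~\ref{app:functional_analysis}. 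This gives $d\sigma_{mp}\,\sigma_{mp}^{-1} = \mathcal{P}_+[-\tfrac12\, d\log S]$, whose modulus part is horizontal by definition. By construction $\sigma_{mp}$ is horizontal, so $\sigma_{mp}^*\omega_{mp} = 0$.

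\textbf{Step 2: the curvature.} The curvature is $\Theta = d\omega + \tfrac12[\omega\wedge\omega]$. Two facts make it vanish.
\begin{itemize}
\item[(i)] The structure group is abelian. $\mathcal{U} \subset L\mathrm{U}(1)$ consists of pointwise multiplication by unimodular functions, so its Lie algebra, the imaginary-valued functions $i\phi(f)$, is commutative. Hence $[\omega\wedge\omega]=0$, and $\Theta$ is gauge invariant, $\Theta = U^{-1}\Theta_{\mathrm{loc}}U = \Theta_{\mathrm{loc}}$.
\item[(ii)] The local form is exact. Because $\mathcal{P}_+$ is a bounded linear operator independent of $S$, the local form $\sigma_{mp}^*\omega = 0$, or equivalently the generator $\mathcal{P}_+[-d\log S]$, is the exterior derivative of the single-valued $0$-form $\mathcal{P}_+[-\log S]$ on $\mathcal{M}$.
\end{itemize}
Then $\Theta_{\mathrm{loc}} = d\,\sigma_{mp}^*\omega = 0$, or, phrased through the generator $\Omega$, $d\bigl(\mathcal{P}_+[-d\log S]\bigr) = \mathcal{P}_+[-d^2\log S] = 0$.

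\textbf{Step 3: collapsing the path-ordering.} Since the algebra is commutative, $\mathcal{T}$ in Eq.~(\ref{eq:dyson}) is superfluous. The kernel becomes
\begin{equation*}
\mathcal{K}_\gamma = \exp\Bigl(\int_\gamma \mathcal{P}_+[-d\log S]\Bigr).
\end{equation*}
The integrand is exact, so Stokes' theorem gives
\begin{equation*}
\mathcal{K}_\gamma = \exp\bigl(\mathcal{P}_+[\log S(0) - \log S(t)]\bigr),
\end{equation*}
which depends only on the endpoints. Homotopy invariance then follows, either from the Ambrose--Singer theorem ($\Theta=0$ implies trivial restricted holonomy) or directly from this formula.

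\textbf{Step 4: global path-independence.} $\mathcal{M}$ is a convex cone: positivity, the $L^\infty$ bounds and Hölder continuity are all preserved under $(1-s)S_0 + sS_1$, as detailed in Appendix~\ref{app:functional_analysis}. It is therefore contractible via straight-line homotopies, so $\pi_1(\mathcal{M})=0$. Any two paths with common endpoints are homotopic rel endpoints, and their transports agree.

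\textbf{Main obstacle.} The hard part is not the algebra but making Steps 1--2 rigorous in infinite dimensions. One must show that:
\begin{itemize}
\item $S \mapsto \log S$ is a smooth (Fréchet) map from $\mathcal{M}$, as an open subset of an affine Banach space, into the Hölder class;
\item $\mathcal{P}_+$ is bounded there, which is Privalov's theorem for the Hilbert transform on Hölder spaces, so that the exponential lands in $\mathcal{G}(\mathcal{A}_+)$;
\item the Cartan calculus ($d^2=0$, Stokes) is legitimate on this Banach manifold.
\end{itemize}
I would first establish the boundedness and smoothness of $\mathcal{P}_+\circ\log$ on the Hölder-regular cone, citing the Wiener-algebra properties of Appendix~\ref{app:functional_analysis}. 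Once that holds, the exactness argument is formal.
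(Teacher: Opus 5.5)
Your proposal is correct and follows essentially the paper's route. Declaring the global section $\sigma_{mp}$ horizontal makes the local potential vanish, abelianness kills $[\omega_{mp},\omega_{mp}]$, and closedness removes the differential term (the paper via $d(U^{-1}dU)=0$ and its Frobenius remark, you via $\Theta_{\mathrm{loc}}=d(0)=0$). Convexity of $\mathcal{M}$ then gives $\pi_1(\mathcal{M})=0$, and your Stokes computation in Step 3 reproduces the holonomic update law, Eq.~(\ref{eq:holonomic}).

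There are two cosmetic slips, neither of which affects flatness. First, with the paper's convention $\mathcal{P}_+=\tfrac12(I+i\mathcal{H})$ the section is $\exp(\mathcal{P}_+[-\log S])$; your extra $\tfrac12$ gives modulus $S^{-1/4}$, and Steps 2--3 silently drop it. Second, calling $\sigma_{mp}^*\omega$ the ``pure-phase part'' of $d\sigma_{mp}\,\sigma_{mp}^{-1}$ would give the nonzero form $-\tfrac{i}{2}\mathcal{H}[d\log S]$, which belongs to a different connection, rather than $0$; this is harmless here only because that form is also exact.
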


The rigorous algebraic and geometric proofs of this theorem are provided in Appendix \ref{app:bundle_topology}.
Physically, this flatness arises from two structural properties: first, the structure group of scalar filters is abelian (commutative), meaning the Lie bracket of the algebra vanishes ($[\omega_{mp}, \omega_{mp}] = 0$); second, the causal horizontal distribution defined by the minimum-phase section is naturally involutive, guaranteeing that the exact differential term vanishes ($d\omega_{mp} = 0$).

\subsection{The holonomic update law}
The global path-independence guaranteed by the flatness theorem yields a measurable operational advantage.
Because the transport operator $\mathcal{K}_\gamma$ is entirely independent of the specific trajectory $\gamma(t)$ taken by the noise floor, the path-ordered Dyson series collapses into a simple, exact state-to-state evaluation.

Integrating the causal generator $\Omega$ directly between a static reference state ($S_{ref}$) and the instantaneous live state ($S_{live}$) yields the exact \textbf{holonomic update law}:
\begin{equation}
    \label{eq:holonomic}
    \mathcal{K} = \exp\left( \mathcal{P}_+ \left[ \log \left( \frac{S_{ref}}{S_{live}} \right) \right] \right).
\end{equation}

For the single-detector, or single-input-single-output (SISO), case, Equation~(\ref{eq:holonomic}) represents the foundational engineering result of the commutative formalism.
It demonstrates that the exact optimal correction required to recover the theoretical maximum SNR depends \textit{only} on the ratio of the instantaneous live noise to the reference noise.
The control system requires no memory of the detector's prior states and no iterative gradient descent; the pipeline simply projects the logarithmic ratio into the causal Wiener algebra and exponentiates the result to bridge the gap between the static templates and the live detector.

It is crucial to emphasize that this functional simplicity, the collapse of the path-ordered integral into a trivial state-to-state ratio, is specifically a consequence of the abelian nature of a single data stream.In MIMO environments, such as fully coherent global detector networks or arrays with correlated noise, the structure group becomes non-Abelian (e.g., $L\mathrm{U}(N)$).
In these higher-dimensional spaces, the algebra is non-commutative ($[\omega, \omega] \neq 0$), the connection is no longer flat, and the optimal filter \textit{does} accumulate geometric hysteresis .
The rigorous principal bundle machinery developed here is therefore not merely a relabeling of standard signal processing, but the necessary mathematical foundation required to systematically track and correct path-dependent geometric phase in those complex, non-commutative limits.

\section{Physical Implications and Future Directions}
\label{sec:discussion}

The reformulation of adaptive whitening as parallel transport on a principal bundle is not merely a mathematical curiosity; it directly resolves the operational tension between non-stationary detector environments and the strict low-latency requirements of multi-messenger astrophysics.
By identifying the exact holonomic update law, the GTSP framework provides several immediate physical and architectural benefits, while also establishing the necessary foundation for tracking quantum-correlated detector networks.

\subsection{The geometric drift metric}
To operate the exact update law efficiently, a control system must know \textit{when} the live detector has deviated far enough from the reference state to warrant a correction.
Standard Euclidean distances between power spectra fail to capture the causal phase dynamics of the filters.
However, the geometric framework naturally provides a coordinate-independent scalar measuring the intrinsic instability of the noise floor: the \textbf{geometric drift metric}, $\mathcal{D}(t)$.

We equip the statistical manifold $\mathcal{M}$ with the fisher information metric, which assigns a distance to spectral fluctuations weighted by the inverse square of the noise power \cite{amariMethodsInformationGeometry2000, georgiouDistancesPowerSpectral2006}:
\begin{equation}
    g_S(\delta S_1, \delta S_2) = \frac{1}{2} \int_{-\infty}^{\infty} \left(\frac{\delta S_1(f)}{S(f)}\right) \left(\frac{\delta S_2(f)}{S(f)}\right) df.
\end{equation}
Functionally, this metric identifies the logarithmic derivative $\dot{S}/S$ as the fundamental geometric velocity.
By defining the instantaneous drift rate as the operator norm of the causal transport generator $\Omega(t)$, we can directly express the geometric drift magnitude in terms of this fisher velocity:
\begin{equation}
    \label{eq:drift_metric}
    \mathcal{D}(t) \equiv \| \Omega(t) \| = \sqrt{ g_S(\dot{S}, \dot{S}) }.
\end{equation}
Physically, $\mathcal{D}(t)$ measures the rate at which the optimal whitening frame is rotating within the vertical fiber of the bundle.
It provides pipeline operators with a rigorous, scalar trigger: rather than updating filters on an arbitrary fixed schedule, the holonomic correction $\mathcal{K}$ can be dynamically applied only when the integrated drift $\int \mathcal{D}(t) dt$ exceeds a predefined tolerance bound.
This ensures optimal computational resource allocation while strictly bounding the maximum allowable SNR loss.

\subsection{Operational causality and latency}
Once the geometric drift exceeds the allowable tolerance, the control system must deploy the correction kernel $\mathcal{K}$ without disrupting the ongoing search.
In the era of automated transient alerts, any signal processing operation that delays the time-of-arrival estimation degrades the pointing accuracy of the global detector network \cite{fairhurstTriangulationGravitationalWave2009}.
As detailed in Appendix D, the exact geometric correction $\mathcal{K}$ can be deployed into real-time pipelines via mathematically isomorphic forward, adjoint, or post-processing representations.

While Appendix \ref{app:operational_reps} concludes that applying the strictly anti-causal adjoint filter directly to the data stream is computationally optimal for massive template banks, doing so inherently incurs a look-ahead latency penalty.
However, because the minimum-phase connection strictly forbids the accumulation of arbitrary gauge phase, the correction kernel modifies the amplitude of the matched filter without shifting its peak in the time domain.
This guarantees that whether the pipeline applies the active pushforward to the template bank, or the highly scalable exterior convolution to the output SNR time series, the geometric update incurs absolutely zero dispersive latency penalty.
The true astrophysical arrival time of the gravitational wave is strictly anchored, even as the detector's sensitivity dynamically evolves.

Furthermore, the production of the calibrated $h(t)$ stream inherently involves acausal inverse-sensing filters. 
Because the calibration pipeline already manages look-ahead buffers, an optimal, acausal minimum-phase whitening filter could theoretically be integrated directly into this calibration stage \cite{vietsReconstructingStrainDetector2018}. 
This would produce a natively whitened data stream for downstream searches without introducing any additional latency penalty, potentially incorporating perturbative corrections for instantaneous spectral drift \cite{kenningtonPerturbativeGauge2026}.

\subsection{Beyond the flat limit: non-Abelian GTSP}
While the present work focuses on the SISO case, framing this problem in the language of gauge theory is critical for the future of gravitational-wave astronomy.
The functional simplicity of the holonomic update law, defined by the collapse of the path-ordered Dyson series into a trivial state-to-state ratio, is specifically a consequence of the Abelian (commutative) nature of a single data stream. 

As the global network upgrades to next-generation facilities such as Cosmic Explorer \cite{reitze2019cosmic}, the Einstein Telescope \cite{punturo2010einstein}, and LISA \cite{amaro2017laser}, the configuration space of the background noise will become significantly more complex.
For multi-channel vector $h(t)$ streams, the injection of frequency-dependent squeezed vacuum states introduces highly sensitive cross-channel quantum noise couplings \cite{caves1981quantum, mccullerSqueezedLightAdvanced2020}.
Further, the coherent cross-correlation of multi-link networks, such as the Einstein Telescope's triangular configuration or LISA's time-delay interferometry, and the subtraction of massive environmental sensor arrays will fundamentally transition the data streams into a coupled MIMO regime.

In these regimes, the scalar structure group $L\mathrm{U}(1)$ expands to the non-Abelian unitary group $L\mathrm{U}(N)$.
Consequently, the Lie algebra of the filters ceases to commute ($[\omega, \omega] \neq 0$).
This fundamental breakdown of commutativity introduces the potential for non-vanishing curvature ($\Theta \neq 0$) and path-dependent geometric hysteresis.
Whether a global, flat minimum-phase section can be uniquely defined for matrix-valued power spectral densities, which would allow the exact differential term $d\omega$ to perfectly cancel the commutator, remains area of active research.
If such a section cannot be globally constructed, parallel transport becomes strictly path-dependent, and the optimal MIMO filter will depend fundamentally on the specific historical trajectory of the environmental noise.

Conventional optimal filtering cannot track this history without continuous, computationally prohibitive recalculations of the multidimensional Wiener-Hopf matrix equations.
However, the gauge-theoretic signal processing framework anticipates this complexity.
By elevating the problem to a principal bundle, GTSP provides the exact differential geometry required to evaluate the connection's curvature and manage any path-ordered hysteresis, securing optimal sensitivity for the next generation of observatories.
Further, in these non-Abelian regimes, the scalar geometric drift metric $\mathcal{D}(t)$ naturally generalizes to a matrix invariant of the multi-channel generator, while any non-vanishing field strength $\Theta$ will emerge as the rigorous physical metric for the accumulated path-dependence of the detector array.

\section{Conclusion}
By treating the power spectrum as a continuous physical manifold and the whitening filters as a principal fiber bundle, we have derived the exact, minimum-latency geometry of single-detector adaptive whitening.
The vanishing curvature of the minimum-phase connection yields a globally path-independent holonomy, allowing search pipelines to dynamically bridge the gap between static reference templates and live, non-stationary data streams without hysteresis or dispersive latency.
Having established the fundamental geometry of adaptive whitening, Part II of this work translates these continuous holonomic updates into the operational architecture required for zero-latency matched filtering.

\appendix
\section{Functional Analytic Foundations}
\label{app:functional_analysis}

This appendix provides the rigorous functional analytic definitions and topological proofs underlying the noise manifold and the causal loop algebra utilized in the main text.

\subsection{Topology of Admissible PSDs}
\label{subsection:top_admissible_psd}

The mathematical structure of the noise manifold strictly inherits its properties from the physical dynamics of the continuous-time noise process, $n(t)$.
We model $n(t)$ as a real-valued, zero-mean, wide-sense stationary (WSS) stochastic process.
By the Wiener-Khinchin theorem, the frequency-domain representation of this noise is governed by its power spectral density (PSD), $S(f)$, defined as the Fourier transform of the time-domain autocorrelation function. 

The physical nature of the detector imposes three inescapable constraints on $S(f)$.
First, because $n(t)$ is purely real, its autocorrelation is an even function, strictly requiring $S(f)$ to be real-valued and symmetric (even).
Second, the detector operates subject to standard quantum limits, possessing an irreducible, quantum noise floor that prevents the total spectral power from vanishing at any frequency.
Finally, the macroscopic noise sources (e.g., thermal fluctuations, seismic vibrations, and suspension resonances) exhibit decaying time correlations governed by physical inertia.
This guarantees not only that their colored contribution resides in $L^1(\mathbb{R})$, but that the total spectral power is globally twice continuously differentiable ($C^2$) and its colored component naturally rolls off at high frequencies with a strict asymptotic inertial decay of order 2.

Consequently, we formally define the physical subalgebra $\mathcal{W}_{\mathrm{phys}} \subset \mathbb{C} \oplus W(\mathbb{R})$ (the unitized Wiener algebra) as the space of symmetric functions $S(f) = c + w(f)$ satisfying this global $C^2$ smoothness and order-2 decay.
Mathematically, this requires the limits 
\begin{equation}
    \lim_{f \to \pm\infty} f^{2+k} w^{(k)}(f) = L_k^{\pm}
\end{equation}
to exist and be finite for $k \in \{0, 1, 2\}$.
Because $S(f)$ is an even function, $w'(f)$ is odd and $w''(f)$ is even.
Consequently, the asymptotic constants strictly match at both extremes ($L_k^+ = L_k^-$), which we denote simply as $L_k$. 

The configuration space of the background noise, $\mathcal{M}$, is formally identified as the strictly positive convex subset within this subalgebra, requiring $S(f) \ge c > 0$ for all $f \in \mathbb{R}$ to account for the broadband quantum floor. 

Because $w = \widehat{R}$ with $R \in L^1(\mathbb{R})$, the Riemann--Lebesgue lemma implies $w \in C_0(\mathbb{R})$, making it bounded and uniformly continuous.
Since $S(f) \ge c > 0$, it follows that $\log S(f)$ is globally bounded.
Moreover, because the unitized space $\mathbb{C} \oplus W(\mathbb{R})$ is a commutative Banach algebra under pointwise multiplication, the Wiener--Lévy theorem guarantees that $\log S$ also safely resides within $\mathbb{C} \oplus W(\mathbb{R})$. 

Topologically, $\mathcal{M}$ possesses three critical properties governing the stability of the control system.
First, $\mathcal{M}$ forms an open subset in the $\mathbb{C} \oplus W(\mathbb{R})$ topology; infinitesimally perturbed noise states remain physically admissible, invertible, and bounded-input bounded-output (BIBO) stable.
Second, it is dense in the space of symmetric, continuous, positive functions satisfying $S-c \in C_0(\mathbb{R})$, allowing rational approximations to converge uniformly to valid detector spectra.
Finally, because $\mathcal{M}$ is a strictly convex set, it is globally contractible and trivially simply connected ($\pi_1(\mathcal{M}) = 0$).

While $\mathcal{M}$ is natively defined on the infinite real line, the gauge-theoretic formulation of spectral factorization requires a compact base space to leverage the rigorous geometry of principal bundles and loop groups.
This physical setup provides the exact analytical constraints necessary to construct a mathematically safe compactification.

Regarding physical detector constraints, strain data is strictly band-limited to an operational band $(f_{\mathrm{low}}, f_{\mathrm{high}})$ by high-pass and anti-aliasing filters, avoiding asymptotic divergences at $f \to 0$, such as Newtonian and radiation pressure noise, and $f \to \infty$, such as cavity pole scaling. 
Setting the out-of-band PSD strictly to zero, however, violates the Paley-Wiener condition, topologically obstructing causal minimum-phase factorization. 
To preserve the geometric formalism, we assume a standard regularization where out-of-band data is smoothly tapered to a strictly positive constant noise floor $S(f) = c > 0$ (such as the digital quantization limit). 
This bounds the physical PSD as $0 < \epsilon \le S(f) \le M < \infty$, simply embedding the regularized PSD into the manifold $\mathcal{M}$.

\subsection{The Compactification Isomorphism}
\label{subsection:compactification}

To construct a valid principal bundle, the phase ambiguity of spectral factorization must be parameterized by a structure group.
While this is naturally handled on closed manifolds using loop groups, the Beurling-Helson theorem precludes a universal Banach algebra isomorphism between the continuous Wiener algebra $W(\mathbb{R})$ and the discrete Wiener algebra $W(S^1)$ under nonlinear coordinate changes.
However, because our physical system operates strictly within the smooth, inertially decaying subalgebra $\mathcal{W}_{\mathrm{phys}}$ established above, we can construct an exact algebraic embedding from the physical noise manifold on the real line into a smooth loop algebra on the circle.

We define the one-point compactification via the Cayley transform, $\theta = 2\arctan(f)$, mapping the extended real frequency line to the circle $S^1 \cong [-\pi, \pi] / \sim$, sending the high-frequency limits $f \to \pm\infty$ to the pole at $\theta = \pm\pi$.

\vspace{0.5em}
\noindent \textbf{Theorem 1 (The Subalgebra Isomorphism).}
\textit{The pullback of the Cayley transform induces an exact algebraic isomorphism and topological homeomorphism (with respect to the $C^2$ norm) between the physical line algebra $\mathcal{W}_{\mathrm{phys}}$ and the stationary smooth circle algebra, defined as $C^2_*(S^1) = \{ \tilde{S} \in C^2(S^1) \mid \tilde{S}'(\pm\pi) = 0 \}$.
Consequently, this mapping embeds the physical noise manifold strictly into the discrete Wiener algebra $W(S^1)$.}
\vspace{0.5em}

\begin{proof}
Let $S \in \mathcal{W}_{\mathrm{phys}}$.
The forward mapping is $\tilde{S}(\theta) = S(\tan(\theta/2))$.
Because the Cayley transform is a smooth diffeomorphism on $(-\pi, \pi)$, $\tilde{S}$ is strictly $C^2$ on $S^1 \setminus \{\pm\pi\}$.
Applying the chain rule, the first derivative is $\tilde{S}'(\theta) = S'(f) \frac{1+f^2}{2}$.
Substituting the asymptotic decay $S'(f) = L_1 f^{-3} + o(f^{-3})$ yields $\lim_{|f| \to \infty} \tilde{S}'(\theta) = \lim_{|f| \to \infty} \frac{L_1}{2} f^{-1} = 0$.
Thus, $\tilde{S}'(\pm\pi) = 0$.

The second derivative evaluates to:
\begin{equation}
    \tilde{S}''(\theta) = S''(f) \frac{(1+f^2)^2}{4} + S'(f) \frac{f(1+f^2)}{2}.
\end{equation}
Substituting the strict limits $S''(f) \sim L_2 f^{-4}$ and $S'(f) \sim L_1 f^{-3}$ yields $\lim_{f \to \pm\infty} \tilde{S}''(\theta) = \frac{L_2}{4} + \frac{L_1}{2}$.
Because physical symmetry enforces $L_k^+ = L_k^-$, the limits from the left ($\theta \to \pi^-$) and right ($\theta \to -\pi^+$) evaluate to the identical finite constant.
Therefore, the second derivative is continuous across the pole, establishing $\tilde{S} \in C^2_*(S^1)$ globally.

Conversely, for surjectivity, let $\tilde{S} \in C^2_*(S^1)$.
The inverse mapping is $S(f) = \tilde{S}(2\arctan(f))$.
By Taylor's theorem at the pole, utilizing the stationary condition $\tilde{S}'(\pm\pi) = 0$, we have $\tilde{S}(\theta) = \tilde{S}(\pm\pi) + \frac{1}{2}\tilde{S}''(\pm\pi)(\theta \mp \pi)^2 + o(|\theta \mp \pi|^2)$.
Substituting the asymptotic relation $\theta \mp \pi = -2/f + \mathcal{O}(f^{-3})$ directly yields $S(f) - c = 2\tilde{S}''(\pm\pi)f^{-2} + o(f^{-2})$, satisfying the $L_0$ constraint.
Applying the chain rule symmetrically to the inverse map confirms the $L_1$ and $L_2$ decay conditions, proving $S \in \mathcal{W}_{\mathrm{phys}}$.

Because $\tilde{S} \in C^2(S^1)$, its second derivative $\tilde{S}''$ resides in $L^2(S^1)$.
By Parseval's theorem, the Fourier coefficients of the second derivative, $c''_n = -n^2 c_n$, are square-summable.
Applying the Cauchy-Schwarz inequality yields $\sum_{n \neq 0} |c_n| \le (\sum_{n \neq 0} n^{-4})^{1/2} (\sum_{n \neq 0} |c''_n|^2)^{1/2} < \infty$, guaranteeing absolute summability ($\{c_n\} \in \ell^1(\mathbb{Z})$).
This guarantees $C^2_*(S^1) \subset W(S^1)$.
Furthermore, because the mapping and its inverse are defined by smooth coordinate transformations with matching boundary derivatives, the bijection is a homeomorphism in the $C^2$ topology.
Since the $C^2$ norm strictly dominates the Wiener norm ($\| \cdot \|_W \le K \| \cdot \|_{C^2}$), this topological equivalence guarantees that smooth geometric operations on the circle inherently preserve absolute stability on the real line.
Since the bijection preserves pointwise multiplication, it acts as an exact algebraic isomorphism between the specified subalgebras.
\end{proof}
\vspace{0.5em}

\noindent \textbf{Remarks on Geometry and Signal Processing.} \\
This isomorphism serves as the foundational bridge of our geometric framework, unifying two distinct paradigms:
\begin{enumerate}
    \item \textbf{Digital Signal Processing:} The Cayley transform evaluated on the imaginary axis represents the classical Bilinear Transform.
    Theorem 1 rigorously formalizes the engineering intuition that strictly proper continuous-time filters map uniquely to discrete-time filters with a stationary derivative at the Nyquist frequency ($\tilde{S}'(\pm\pi) = 0$).
    \item \textbf{Gauge Theoretic Stability:} By establishing a topological homeomorphism in the dominating $C^2$ norm, Theorem 1 guarantees that we may rigorously translate physically admissible noise states to the closed circle, perform our geometric operations within the robust topological framework of loop groups and discrete Hardy spaces, and safely pull smooth geometric constructs—such as connection forms and parallel transport trajectories—back to the physical time domain without ever violating bounded-input bounded-output (BIBO) stability.
\end{enumerate}

\subsection{The Causal Algebra and the Structure Group}
To guarantee bounded-input bounded-output (BIBO) stability, valid physical whitening filters must natively reside in the continuous time-domain Wiener algebra, $\mathcal{A}$ (the Fourier-isomorphic image of $\mathbb{C} \oplus W(\mathbb{R})$) \cite{wiener1933fourier, gohberg1960systems}.
This Banach algebra consists of distribution-valued functions of the form $h_W(t) = c\delta(t) + w(t)$, where $h_W(t)$ is the impulse response of $W(f)$, $c \in \mathbb{C}$ is the instantaneous gain and $w(t) \in L^1(\mathbb{R})$ is the absolutely integrable memory kernel.
The causal subalgebra, $\mathcal{A}_+$, restricts these functions to those strictly supported on the non-negative real line ($t \geq 0$).
Via the Laplace transform, $\mathcal{A}_+$ is rigorously identified as a subalgebra of the Hardy space $H^\infty(\mathbb{C}^+)$ of bounded analytic functions in the complex upper half-plane \cite{hoffmanBanachSpacesAnalytic1962, garnettBoundedAnalyticFunctions2007}.
For a filter to be both causal and causally invertible, it must belong to the group of units of this subalgebra, denoted $\mathcal{G}(\mathcal{A}_+)$, which corresponds precisely to the set of strict minimum-phase filters.

The phase ambiguity inherent to spectral factorization is parameterized by the structure group.
By the Beurling-Lax theorem \citep{helson2014lectures, beurling1949linear}, the shift-invariant subspaces characterizing causal phase ambiguities are generated by \textit{inner functions} (all-pass filters).
However, the semigroup of inner functions lacks inverses within $\mathcal{A}_+$, which prohibits the bidirectional gauge transformations required to define a valid principal bundle. 

To overcome this, we leverage Theorem 1 to map the filter symmetries onto the compact topology of the circle.
We define our gauge symmetry as an infinite-dimensional unitary loop group natively on $S^1$.
To ensure the group remains in the connected component of the identity, permitting a globally defined logarithm, we restrict the loops to those with zero winding number.
Accordingly, we define the structure group as:
\begin{equation}
    \mathcal{U} = \left\{ U: S^1 \to \mathrm{U}(1) \;\middle|\; U \in C^2_*(S^1), \ \mathrm{wind}(U) = 0 \right\}.
\end{equation}
Notice that the strict $C^2$ homeomorphism established in Theorem 1 is the exact algebraic mechanism that guarantees smooth loops pull back safely to the physical time-domain Wiener algebra, rigorously preserving bounded-input bounded-output stability.

\subsection{Birkhoff Factorization and the Intertwined Riesz Projection}
As an abelian Banach loop group equipped with the $C^2_*(S^1)$ topology, $\mathcal{U}$ admits a global Birkhoff factorization \citep{pressley1986loop}.
This powerful geometric theorem guarantees that any loop mapped from our physical noise manifold can be uniquely decomposed into strictly causal, anti-causal, and zero-mode components. 

Because the loop group is Abelian, the logarithmic isomorphism provided by the exponential map $\exp: L\mathfrak{u}(1) \to \mathcal{U}$ allows us to perform this geometric decomposition linearly.
Because the Cayley transform acts as a strict conformal equivalence between the unit disk $\mathbb{D}$ and the complex upper half-plane $\mathbb{C}^+$, the projection onto the causal Hardy space intertwines directly with the continuous analytic signal operator utilized in standard time-domain signal processing. 
In the continuous domain, the true idempotent causal projection onto the Hardy space acts strictly as 
\begin{equation}
\mathcal{P}_+ = \frac{1}{2}(I + i\mathcal{H})
\label{eq:riesz-projection}
\end{equation}
where $\mathcal{H}$, is the Hilbert transform. 

The topological equivalence established in Theorem 1 guarantees that applying this projection and pulling back to the real line safely preserves absolute stability and strict causality ($t \ge 0$) in the time-domain Wiener algebra.

Because minimum-phase functions are strictly analytic and non-vanishing in $\mathbb{C}^+$ (and symmetrically in $\mathbb{D}$), the argument principle guarantees their winding number around the origin is identically zero.
This trivial topology ensures a globally defined, single-valued complex logarithm, rigorously eliminating the $2\pi i n$ branch indeterminacy that frequently plagues general spectral estimation.

Thus, the operation $W_{mp} = \exp(\mathcal{P}_+[-\log S])$ globally and smoothly maps any noise state $S \in \mathcal{M}$ to its causal spectral factor.
To establish uniqueness geometrically, suppose another causal filter $W$ satisfies $|W|^2=S^{-1}$.
Taking logarithms, the difference between the two logarithmic factorizations must reside simultaneously in the causal subalgebra (analytic in the disk) and the anti-causal subalgebra.
Their intersection contains only constants; the strict minimum-phase condition uniquely fixes this residual zero-mode phase freedom.
This explicitly and uniquely constructs the holonomic minimum-phase section $\sigma_{mp}$ utilized for parallel transport in the main text.

\section{Bundle Topology and Connection Form Derivations}
\label{app:bundle_topology}

This appendix provides the explicit geometric derivations of the causal transport generator and the formal proof of the Flatness Theorem, which underpins the path-independence of the adaptive whitening update law.

\subsection{The Maurer-Cartan Form and the Causal Generator}
In the principal whitening bundle $\mathcal{W}$, a connection smoothly separates the tangent space at any filter into a vertical subspace (pure phase gauge transformations) and a horizontal subspace (physical changes in noise magnitude) \cite{kobayashi1963foundations, bleecker2013gauge}.
We define the minimum-phase connection by declaring the global minimum-phase section $\sigma_{mp}$ to be everywhere horizontal.
This geometric constraint forces the local gauge potential, the pullback of the connection form to the base manifold, to vanish identically: $\mathcal{A}_{mp} \equiv \sigma_{mp}^* \omega_{mp} = 0$.
Extending this structure to the full bundle via equivariance, the horizontal subspace at any arbitrary filter $W = W_{mp} \cdot U$ is generated by the pushforward of the section's geometry under right translation: $H_W = (dR_U)_* H_{W_{mp}}$ \cite{nakaharaGeometryTopologyPhysics2005, sharpe1997differential}.

Under a general change of frame $W \mapsto \sigma_{mp} \cdot U$, the connection transforms affinely via the adjoint action:
\begin{equation}
    \omega_{mp} = \text{Ad}_{U^{-1}}(\mathcal{A}_{mp}) + U^{-1} dU.
\end{equation}
Substituting the vanishing potential $\mathcal{A}_{mp} = 0$, we recover the explicit Maurer-Cartan form on the total space: $\omega_{mp} = U^{-1} dU$.

To determine how a filter must dynamically evolve to remain horizontal as the base noise state drifts along a trajectory $S(t)$, we examine the kinematics of the global section.
The transport equation requires the covariant derivative of the physical filter to vanish ($\nabla_{\dot{\gamma}} W = 0$).
Because the section is purely horizontal, its evolution in the full causal algebra $\mathcal{A}_+$ is governed entirely by its Darboux derivative, defining the instantaneous algebraic generator of the trajectory: $\Omega(t) \equiv \omega_{mp}(\dot{\gamma}) = \dot{W}_{mp}W_{mp}^{-1}$. 

For the minimum-phase filter, we defined $W_{mp} = \exp\left(\mathcal{P}_+\left[-\log S\right]\right)$.
Because the structure group $\mathcal{U}$ is abelian, the loop algebra is commutative, meaning the derivative of the exponential map simplifies exactly as it does for scalar functions: $\frac{d}{dt} e^{X(t)} = \dot{X}(t) e^{X(t)}$.
Furthermore, because the causal projection $\mathcal{P}_+$ acts strictly on the frequency domain, it commutes with the macroscopic time derivative.
Taking the time derivative of $W_{mp}$ and right-multiplying by its inverse yields the explicit causal generator.
The whitening condition dictates that the real part of the generator (the scaling flow) exactly compensates for the logarithmic noise drift.
To maintain horizontality, the generator must reside strictly within the causal subalgebra $\mathfrak{g}_{mp}$.
By the Titchmarsh theorem, this causality constraint requires the real and imaginary parts of the generator to form a Kramers-Kronig (Hilbert transform) pair.
Thus, the unique causal generator is obtained by applying the strict idempotent Riesz projection directly to the real scaling flow:
\begin{equation}
    \Omega(t) = \mathcal{P}_+\left[- \frac{\partial}{\partial t} \log S(t) \right].
\end{equation}
This establishes that the minimum-latency transport of a filter is generated purely by the causal projection of the instantaneous logarithmic drift of the power spectrum.

\subsection{Proof of the Flatness Theorem}
To determine if the parallel transport generated by $\Omega(t)$ is path-independent, we must evaluate the holonomy of the connection.
The holonomy is governed by the curvature 2-form $\Theta$, which is related to the connection form $\omega_{mp}$ via the Cartan structure equation \citep{kobayashi1963foundations, bleecker2013gauge}:
\begin{equation}
    \Theta = d\omega_{mp} + \frac{1}{2} [\omega_{mp}, \omega_{mp}].
\end{equation}
We prove that both the differential term and the algebraic commutator term vanish identically. 

First, because the structure group $\mathcal{U}$ consists of diagonal unitary operators, the corresponding loop algebra is abelian.
The Lie bracket of any two elements is therefore trivial, which immediately implies that the commutator term vanishes: $[\omega_{mp}, \omega_{mp}] \equiv 0$.

Second, we evaluate the exterior derivative of the connection form on the fibers, $d\omega_{mp} = d(U^{-1}dU)$.
Applying the product rule for exterior derivatives yields $d(U^{-1}) \wedge dU + U^{-1} d^2 U$.
By the fundamental identity of exterior calculus, $d^2 = 0$.
Utilizing the inverse differential identity $d(U^{-1}U) = 0$, we have $d(U^{-1}) = -U^{-1} dU U^{-1}$.
Because the structure group is abelian, this reduces to $-U^{-2} dU$.
Substituting this back yields $-U^{-2} dU \wedge dU$.
Because $dU$ takes values in an abelian Lie algebra, its wedge product with itself vanishes identically ($dU \wedge dU = 0$), proving that the connection form is closed: $d\omega_{mp} \equiv 0$.

The simultaneous vanishing of the Lie bracket and the exterior derivative guarantees that the curvature tensor is identically zero ($\Theta \equiv 0$), proving the flatness theorem. 

This algebraic result is deeply mirrored by the geometric topology of the bundle.
By defining our horizontal distribution as the tangent space to a global integral submanifold (the minimum-phase section $\sigma_{mp}$), the distribution is naturally involutive.
By the Frobenius theorem \citep{spivak1999comprehensive, warner1983foundations, lee2013smooth}, an involutive horizontal distribution is the necessary and sufficient geometric condition for a strictly flat connection.
Because the base manifold $\mathcal{M}$ is simply connected (as proven in Appendix \ref{app:functional_analysis}), this local flatness promotes globally to strict path-independence, ensuring the optimal filter correction $\mathcal{K}$ is entirely free of geometric hysteresis.

\section{Covariant signal transport and SNR conservation}
\label{app:snr_conservation}

In this appendix, we explicitly connect the minimum-phase connection on the principal whitening bundle $\mathcal{W}$ to the conservation of the optimal SNR via an induced covariant derivative on the physical signal space.

Let the physical signal space for a given PSD $S \in \mathcal{M}$ be the Hilbert space $\mathcal{H}_S$ equipped with the noise-weighted inner product $\langle h_1, h_2 \rangle_S = \int \tilde{h}_1^* S^{-1} \tilde{h}_2 \, df$.
A whitening filter $W \in \mathcal{W}_S$ acts as a local frame, or \textit{vielbein} (a geometric frame field that diagonalizes the noise metric), mapping the curved fiber $\mathcal{H}_S$ isometrically to a fixed flat Hilbert space $\mathcal{H}_{flat}$ via the left action $\psi = W \tilde{h}$, where $\psi$ is the whitened template. 

To define parallel transport for a signal $h(t)$ as the background noise $S(t)$ evolves along a curve $\gamma(t)$, we demand its representation in the flat frame remains constant: $\frac{d}{dt}(W \tilde{h}) = \dot{W}\tilde{h} + W\dot{\tilde{h}} = 0$.
Solving for the physical signal's evolution yields $\dot{\tilde{h}} = -W^{-1}\dot{W}\tilde{h}$.
Because the structure group is commutative, the connection 1-form evaluated along the trajectory is exactly the generator $\Omega = \dot{W}W^{-1} = W^{-1}\dot{W}$. 
Thus, we identify the induced covariant derivative on the signal bundle:
\begin{equation}
    \nabla_{\dot{\gamma}} h \equiv \dot{\tilde{h}} + \Omega \tilde{h} = 0.
\end{equation}

To guarantee that this covariant derivative preserves the metric geometry, it must be strictly metric-compatible with the time-varying inner product.
Because the noise metric factors as $S^{-1} = W^* W$, the commutativity of the scalar fields ensures the connection exactly absorbs the derivative of the metric:
\begin{equation}
\begin{split}
    \Omega^* S^{-1} + S^{-1} \Omega &= (\dot{W}^* (W^*)^{-1}) W^* W + W^* W (W^{-1} \dot{W}) \\
    &= \dot{W}^* W + W^* \dot{W} = \frac{d}{dt}(S^{-1}).
\end{split}
\end{equation}
With metric compatibility established, the optimal SNR of a signal $h(t)$, given by its scalar norm $\rho^2 = \langle h, h \rangle_S$, safely obeys the covariant Leibniz rule along the trajectory:
\begin{equation}
    \frac{d}{dt} \rho^2 = \langle \nabla_{\dot{\gamma}} h, h \rangle_S + \langle h, \nabla_{\dot{\gamma}} h \rangle_S = 0,
\end{equation}
demonstrating that parallel transport strictly conserves the theoretical SNR magnitude. 

\textit{Remark on the necessity of minimum phase:}
It is important to note that \textit{any} connection (including those accumulating arbitrary phase) preserves this scalar norm, since the unitary phase action cancels in the inner product ($U^* U = \mathbb{I}$).
However, an arbitrary connection introduces non-linear phase accumulation (dispersion), which smears the time-domain wave packet and destroys the peak of the physical matched-filter cross-correlation.
Even a linear phase connection, while preserving the wave shape, introduces a spurious time-of-arrival shift.
The minimum-phase connection ($\Omega \in \mathfrak{g}_{mp}$) is geometrically unique in its ability to simultaneously conserve the exact SNR norm and strictly anchor the astrophysical time-of-arrival by preventing any accumulation of acausal geometric phase.

\section{Operational representations of the update law}
\label{app:operational_reps}

The holonomic update law derived in the main text yields a path-independent correction kernel $\mathcal{K}$, which maps the optimal filter from a static reference state to the dynamic live state. 
In the context of a real-time signal-processing pipeline, this exact geometric correction can be deployed in three distinct but mathematically isomorphic ways. 
These representations correspond to how the kernel is distributed within the matched-filter inner product.
Because the correction acts on templates that have already been mapped to the static reference frame, this occurs strictly within the flat Hilbert space: $\rho = \langle d, \psi \rangle_{flat}$.
This choice of distribution presents markedly different computational scaling behaviors for large template banks.

The first approach is the \textit{forward action} (or pushforward).
Here, the transport operator updates the reference templates to match the live noise state: $\psi_{live} = \mathcal{K} \psi_{ref}$.
This active transformation ensures the physical waveforms exactly match the instantaneous whitening frame.
However, because this operation must be applied to every template individually, its computational cost scales linearly with the size of the template bank, $\mathcal{O}(N_{templates})$, making it computationally prohibitive for massive banks.

The second approach is the \textit{adjoint action} (or pullback).
By the fundamental definition of the adjoint in the flat Hilbert space, the action of the kernel on the template is equivalent to the action of its adjoint on the incoming data stream: $\langle d, \mathcal{K} \psi_{ref} \rangle_{flat} = \langle \mathcal{K}^\dagger d, \psi_{ref} \rangle_{flat}$.
This is the most computationally efficient architecture: it allows a massive, static bank of reference templates to remain entirely unmodified in memory while a single, dynamically updated kernel pre-conditions the live data stream.Because there is only one data stream, this correction scales as $\mathcal{O}(1)$ with respect to the template bank. 
The primary engineering drawback to this approach is causality. 
Because the drift correction kernel $\mathcal{K}$ is constructed via the causal projection $\mathcal{P}_+$, its adjoint $\mathcal{K}^\dagger$ is strictly \textit{anti-causal}. 
In a real-time control system, applying an anti-causal filter requires future knowledge of the data stream, necessitating a finite look-ahead buffer and injecting a small, fixed latency into the pipeline. 
Crucially, this latency penalty is bounded by the effective memory length (relaxation time) $\tau_{\mathcal{K}}$ of the geometric correction kernel, ensuring the system remains well within the requirements for low-latency astrophysical alerting.

The third approach is the \textit{post-processing action} (or exterior convolution).
Because the matched-filter statistic is evaluated as a time-domain cross-correlation, the associativity of linear filtering allows the kernel to be factored entirely outside the inner product. 
The pipeline can compute the naive detection statistic using the uncorrected reference template, yielding the time series $\rho_{naive}(\tau) = (d * \psi_{ref})(\tau)$, and then apply the causal geometric correction directly to this output: $\rho_{opt}(\tau) = (k * \rho_{naive})(\tau)$, where $k(\tau)$ is the time-domain impulse response of the operator $\mathcal{K}$. 
Like the forward action, this approach scales as $\mathcal{O}(N_{templates})$ because it must be applied to every output SNR stream.

Because these three operational representations are algebraically equivalent evaluations of the exact same holonomy, they all rigorously guarantee the recovery of the optimal SNR.
However, for modern gravitational-wave searches employing massive template banks, the template-agnostic adjoint action (Method 2) emerges as the strictly dominant architectural choice, provided the anti-causal buffer delay falls within acceptable low-latency alerting tolerances.

\begin{acknowledgments}
The authors acknowledge support from the National Science Foundation under awards OAC-2103662, PHY-2308881, PHY-2011865, OAC-2201445, OAC-2018299, PHY-0757058, PHY-0823459, PHY-2207728, PHY-2513124, PHY-2110594, and PHY-2513358.
C.H. acknowledges generous support from the Pennsylvania State University Eberly College of Science, the Department of Physics, the Institute for Gravitation and the Cosmos (IGC), and the Institute for Computational and Data Sciences.

We extend distinct gratitude to the Institute for Gravitation and the Cosmos for sponsoring the ``Mathematical Aspects of Physics'' Seminar series (MAP Chat).
The geometric framework presented in this work, specifically the application of fiber bundles to signal processing, was developed directly from the interdisciplinary discussions and collaborative learning fostered within that forum.
\end{acknowledgments}

\bibliography{references}

\end{document}